\begin{document}

\newcommand{\REV}[1]{{\color{red}#1}}
\newcommand{\RM}{\mathbb{R}}
\newcommand{\ZM}{\mathbb{Z}}
\newcommand{\QM}{\mathbb{Q}}
\newcommand{\NM}{\mathbb{N}}
\newcommand{\CM}{\mathbb{C}}
\renewcommand{\PM}{\mathbb{P}}
\newcommand{\SM}{\mathbb{S}}
\newcommand{\R}{\mathcal{R}}
\newcommand{\I}{\mathcal{I}}
\def\e{\mathrm{e}}
\def\i{\mathrm{i}}
\def\d{\mathrm{d}}
\newcommand{\MS}{MS}

\newtheorem{theorem}{Theorem}
\newtheorem{lemma}{Lemma}
\newtheorem{remark}{Remark}
\newtheorem{definition}{Definition}

\title{Tomography: mathematical aspects and applications}

\author{Paolo Facchi$^{1,2}$, Marilena Ligab\`o$^3$, Sergio Solimini$^3$}

\address{$^1$Dipartimento di Fisica and MECENAS, Universit\`a di Bari, I-70126 Bari, Italy}
\address{$^2$INFN, Sezione di Bari, I-70126 Bari, Italy} 
\address{$^3$Dipartimento di Meccanica, Matematica e Management, Politecnico di Bari,
        I-70125  Bari, Italy}

\begin{abstract}
In this article we present a  review of the Radon transform and the instability of the tomographic reconstruction process. We show some new mathematical results in tomography obtained by a variational formulation of the reconstruction problem based on the minimization of a Mumford-Shah type functional. Finally, we exhibit a physical interpretation of this new technique and discuss some possible generalizations.\end{abstract}

\pacs{03.65.Wj, 42.30.Wb, 02.30.Uu}

\vspace{2pc}
\noindent{\it Keywords}: Radon transform; integral geometry; image segmentation; calculus of variations.

\maketitle
\ioptwocol

\section{Introduction}
The primary goal of tomography is to determine the internal structure of an object without cutting it, namely  using data obtained by methods that leave the object under investigation undamaged.
These data can be obtained by exploiting the interaction between the object and various kinds of probes including X-rays, electrons, 
and many others. After its interaction with the object under investigation, the probe is detected to produce what we call a projected distribution or tomogram, see Fig. \ref{fig:profiles}.

\begin{figure}[t]
\includegraphics[width=0.45\textwidth]{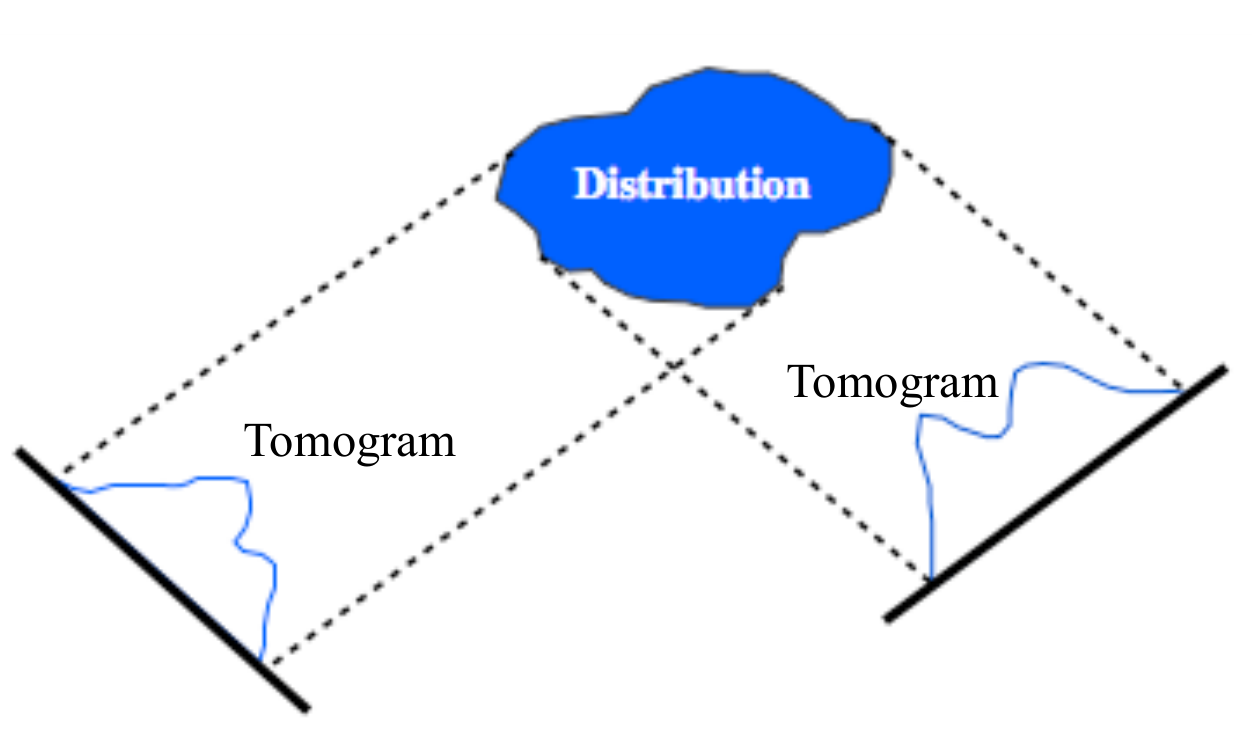}
\caption{Distribution and two tomograms. }
\label{fig:profiles}
\end{figure}
Tomography is a rapidly evolving field for its broad impact on issues of fundamental nature and for its important applications such as the development of diagnostic tools relevant to disparate fields, such as engineering, biomedical and archaeometry. Moreover, tomography can be a powerful tool for many reconstruction problems coming from many  areas of research, such as imaging, quantum information and computation, cryptography, lithography, metrology and many others, see Fig.~\ref{fig:Tomography}.

\begin{figure}[t]
\includegraphics[width=0.45\textwidth]{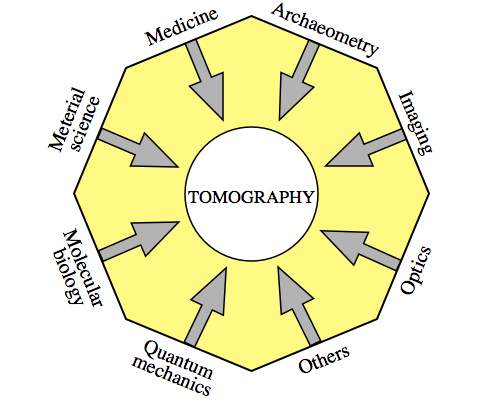}
\caption{Reconstruction problems from diverse fields may be united within the framework of tomography.}
\label{fig:Tomography}
\end{figure}

From the mathematical point of view the reconstruction problem can be formulated as follows: one wants to recover an unknown function through the knowledge of an appropriate family of integral transforms. 
It was proved by J. Radon~\cite{Rad1917} that a smooth function $f(x,y)$ on $\RM^2$ can be determined explicitly by means of its integrals over the lines in $\RM^2$. Let $\R{f}(X,\theta)$ denote the integral of $f$ along the line $x \cos\theta + y \sin\theta  = X$ (tomogram). Then

\begin{equation}
\label{eq:Radoninversion}
f(x,y)=(-\Delta)^{\frac{1}{2}}\int_{0}^{2\pi}   \R{f}(x \cos\theta + y \sin\theta, \theta)\; 
\frac{\d\theta}{4\pi },
\end{equation}
where $\Delta=\frac{\partial^2}{\partial x^2} + \frac{\partial^2}{\partial y^2}$ is the Laplacian on $\RM^2$, and its square root is defined by Fourier transform (see Theorem~\ref{thm:inversioneformula}).
We now observe that the formula above has built in a remarkable duality: first one integrates over the set of points in a line, then one
integrates over the set of lines passing through a given point. This formula can be extended to the $N$-dimensional case by computing the integrals of the function $f$ on all possible hyperplanes.
This suggests to consider the transform $f \mapsto  \R{f}$ defined as follows. If $f$ is a function on $\RM^N$ then $\R{f}$ is the function defined on the space of all possible $(N-1)$-dimensional planes in $\RM^N$ such that, given a hyperplane $\lambda$, the value of $\R{f}(\lambda)$ is given by the integral of $f$ along  $\lambda$. The function $\R{f}$ is called \emph{Radon transform} of $f$.

There exist several important generalizations of the
Radon transform by 
John \cite{John}, Gel'fand \cite{Gelf}, Helgason \cite{Helgason} and Strichartz \cite{Strichartz}.
 More recent analysis has been boosted by Margarita and Volodya Man'ko and has focused on symplectic transforms \cite{Mancini95}, on the deep relationship with classical systems and classical dynamics \cite{Olga97,tomogram,m2}, on the formalism of
star product quantization \cite{MarmoPhysScr,MarmoJPA,Marmoopen}, and on the study of
marginals along curves that are not straight lines
\cite{ManMenPhysD,tomocurved}.

In quantum mechanics the Radon transform of the Wigner function
\cite{Wig32,Moyal,Hillary84}
was considered in the tomographic approach to the study
of quantum states \cite{Ber-Ber,Vog-Ris} and experimentally
realized with different particles and in diverse situations. 
For a review on the modern mathematical aspects of classical and quantum tomography see~\cite{tomolectures}. 
Good reviews on recent tomographic applications can be found in~\cite{Deans} and in~\cite{Jardabook}, where particular emphasis is given on maximum likelihood methods, that enable one to extract the maximum reliable
information from the available data can be found.

As explained above, from the mathematical point of view, the internal structure of the object is described by an unknown function $f$ (density), that is connected via an operator to some measured quantity $g$ (tomograms). The tomographic reconstruction problem can be stated as follows: for given data $g$, the task is to find $f$ from the operator equation $\R{f}=g$. 
There are many problems related to the implementation of effective tomographic techniques due to the instability of the reconstruction process. There are two principal reasons of this instability. The first one is the ill-posedness of the reconstruction problem: in order to obtain a satisfactory estimate of the unknown function it is necessary an extremely precise knowledge of its tomograms, which is in general physically unattainable~\cite{tomothick}. The second reason is the discrete and possibly imperfect nature of data that allows to obtain only an approximation of the unknown function.

The first question is whether a partial information still determines the function uniquely. A negative answer is given by a theorem of Smith, Solomon and Wagner~\cite{SmithSolmonWagner}, that states: ``A function $f$ with compact support in the plane is uniquely determined by any infinite set, but by no finite set of its tomograms''. Therefore, it is clear that one has to abandon the request of uniqueness in the applications of tomography. Thus, due to the ill-posedness of reconstruction problem and to the loss of uniqueness in the inversion process, a regularization method has to be introduced to stabilize the inversion.

A powerful approach is the introduction of a Mumford-Shah (MS)  functional, first introduced in a different context for image denoising and segmentation~\cite{MS}. The main motivation is that, in many practical applications, one is not only interested in the reconstruction of the density distribution $f$, but also in the extraction of some specific features or patterns of the image. An example is the problem of the determination of the boundaries of inner organs. By minimizing the MS functional, one can find not only (an approximation of) the function but also its sharp contours. Very recently a MS functional for applications to tomography has been introduced in the literature~\cite{RamlauRing1, RamlauRing2, RondiSantosa}. Some preliminary results in this context are already available but there are also many interesting open problems and promising results in this direction, as we will try to explain in the second part of this article.

The article is organized as follows. Section~\ref{Sec:Radon} contains a short introduction to the Radon transform, its dual map and the inversion formula. Section~\ref{Sec:ill} is devoted to a brief  discussion on the ill-posedness of the tomographic reconstruction and to the introduction of  regularization methods. In Section~\ref{Sec:MS} a MS functional is applied to tomography  as  a regularization method. In particular, in Subsection~\ref{Subsec:MS}  the piecewise constant model and  known results are discussed together with a short list of some interesting open problems.
Finally, in Section~\ref{Sec:3Dinterpretation} we present an electrostatic  interpretation of the regularization method based on the MS functional,
which motivates us to introduce an improved regularization method, based on the Blake-Zisserman functional~\cite{BZ}, as a relaxed version of the previous one.

\section{The Radon transform: definition and inversion formula}\label{Sec:Radon}

Consider a body in the plane $\RM^2$, and consider a beam of particles (neutrons, electrons, X-rays, etc.) emitted by a source. Assume that the initial intensity of the beam is $I_{0}$. When the particles pass through the body they are absorbed or scattered and the intensity of the beam traversing a length $\Delta s$ decreases by an amount  proportional to the density of the body $\mu$, namely
\begin{equation}
\Delta I /I=-\Delta s\, \mu(s),
\end{equation}
so that
\begin{equation}
I(s)=I_{0}\exp \left(-\int_{0}^{s}\mu(r)\;\d r \right).
\end{equation}
A detector placed at the exit of the body measures the final intensity $I(s)$ and then from
\begin{equation}
-\ln \frac{I(s)}{I_0} =\int_{0}^{s} \mu(r)\; \d r
\end{equation}
one can record the value of the density $\mu$ integrated on a line.
If another ray with a different direction is considered, with the same procedure one obtains the value of the integral of the density on that line. 

The mathematical model of the above setup is the following:
Given a 
smooth function $f(x)$ on the plane, $x\in\RM^2$,
and a line  $\lambda$, consider its tomogram, given by
\begin{equation}\label{sharp lambda}
\R{f}(\lambda)=\int_{\lambda}  f(x)\; \d m(x),
\end{equation}
where $\d m$ is the Euclidean measure on the line $\lambda$.
In this way, we have defined an operator $\R$ that maps a smooth function $f$ on the plane $\RM^2$ into a function $\R{f}$ on $\PM^{2}$, the manifold of the lines in $\RM^2$.

We ask the following question: If we know the family of tomograms $(\R{f}(\lambda))_{\lambda \in \PM^{2}}$, can we reconstruct the density function $f$? The answer is affirmative and in the following we will see how to obtain this result.

Let us generalize the above definitions to the case of an $N$-dimensional space.
Let $f$ be a function defined on $\RM^N$, integrable on each hyperplane in $\RM^N$ and let $\PM^N$ be the manifold of all hyperplanes in $\RM^N$. The Radon transform of $f$ is defined 
by Eq.~(\ref{sharp lambda}),
where $\d m$ is the Euclidean measure on the hyperplane $\lambda$.
Thus we have an operator $\R$, the \emph{Radon transform}, that maps a function $f$ on $\RM^N$ into a function $\R{f}$ on $\PM^{N}$, namely $f \mapsto \R{f}$. Its dual  transform, also called \emph{back projection operator}, $g \mapsto \I{g}$ associates to a function $g$ on $\PM^{N}$ the function $\I{g}$ on $\RM^N$ given by
\begin{equation}\label{dual map}
\I{g}(x)=\int_{x \in \lambda} g(\lambda) \; \d\mu(\lambda),
\end{equation}
where $\d\mu$ is the unique probability measure on the compact set $\{\lambda \in \PM^{N}|x \in \lambda\}$ which is invariant under the group of rotations around $x$.

\begin{figure}[t]
\includegraphics[width=0.45\textwidth]{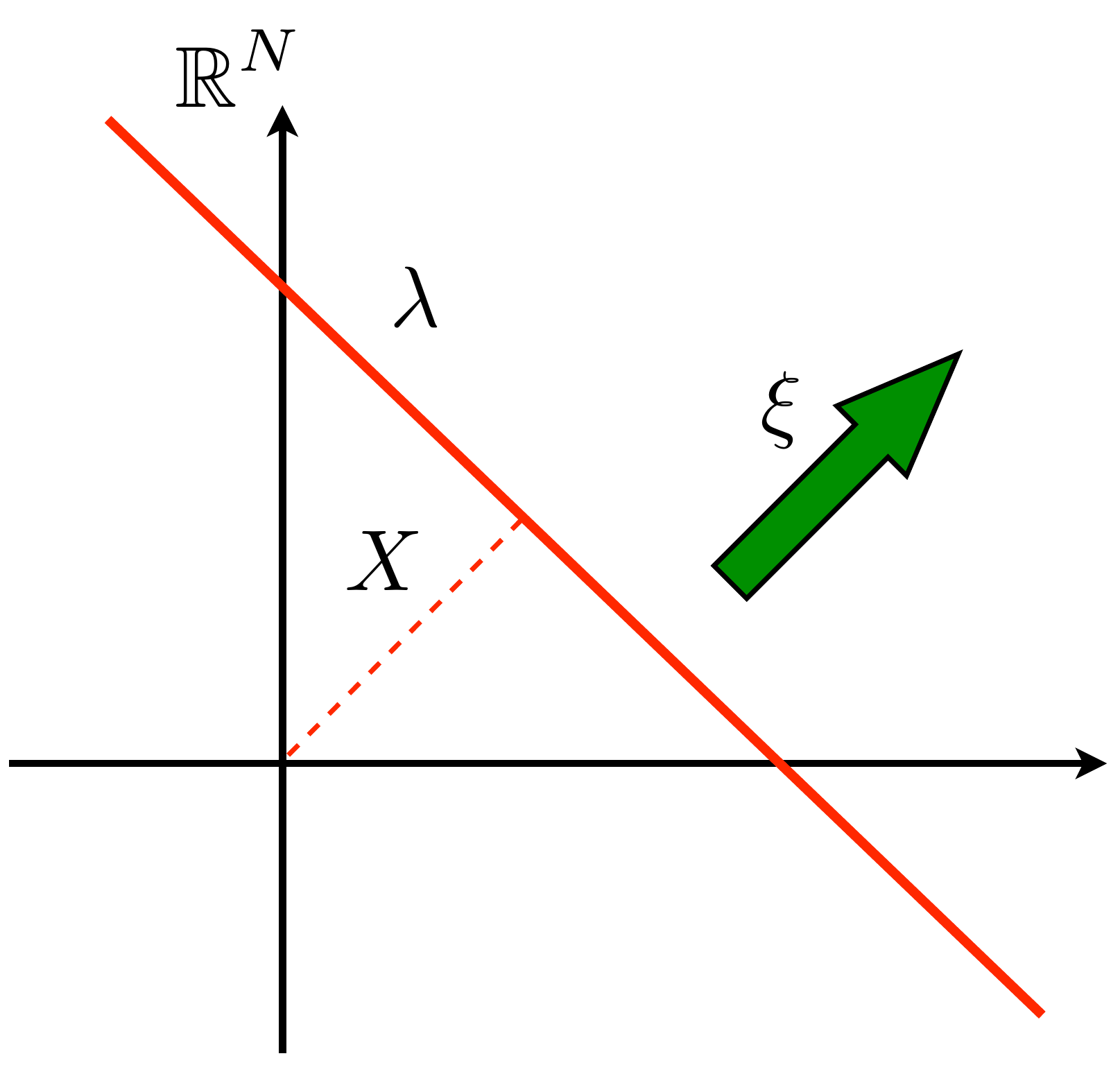}
\caption{Parametrization of the hyperplane $\lambda$ using its signed distance $X$ from the origin and a unit vector $\xi$ perpendicular to $\lambda$.}
\label{fig:radon nd}
\end{figure}

Let us consider the following covering of $\PM^N$
\begin{equation}\label{double covering}
\RM \times \SM^{N-1} \to \PM^N, \qquad (X,\xi)\mapsto \lambda,
\end{equation}
where $\SM^{N-1}$ is the unit sphere in $\RM^N$. Thus, the
equation of the hyperplane $\lambda$ is
\begin{equation}
\lambda=\{x\in\RM^N \, | \,  X-\xi \cdot x=0\},
\end{equation}
with $a\cdot b$ denoting the Euclidean inner product of $a, b\in\RM^N$. See Fig. \ref{fig:radon nd}.
Observe that the pairs $(X,\xi),(-X,-\xi) \in \RM \times \SM^{N-1}$ are mapped into the same hyperplane $\lambda \in \PM^N$. Therefore (\ref{double covering}) is a double covering of $\PM^N$.
Thus $\PM^N$ has a canonical manifold structure with respect to which this covering mapping is differentiable. We  identify continuous (differentiable) functions $g$ on $\PM^N$ with continuous (differentiable) functions $g$ on $\RM \times \SM^{N-1}$ satisfying $g(X,\xi)=g(-X,-\xi)$.

We will momentarily work in the Schwartz space $\mathcal{S}(\RM^N)$ of complex-valued rapidly decreasing functions on $\RM^N$. 
In analogy with $\mathcal{S}(\RM^N)$ we define $\mathcal{S}(\RM \times \SM^{N-1})$ as the space of $C^{\infty}$ functions $g$ on $\RM \times \SM^{N-1}$ which for any integers $m\geq 0$, any  multiindex $\alpha \in \NM^N$, and any differential operator $D$ on $\SM^{N-1}$ satisfy
\begin{equation}
\sup_{X \in \RM , \xi \in \SM^{N-1}}\left| |X|^m \frac{\partial^\alpha(Dg)}{\partial X^\alpha}(X,\xi)\right| < +\infty.
\end{equation}
The space $\mathcal{S}(\PM^N)$ is then defined as the set of $g \in \mathcal{S}(\RM \times \SM^{N-1})$ satisfying $g(-X,-\xi)=g(X,\xi)$.

Now we want to obtain an inversion formula, namely we want to prove that one can recover a function $f$ on $\RM^N$ from the knowledge of its Radon transform.
 In order to get this result we need a preliminary lemma, whose proof can be found in~\cite{tomolectures}, which suggests an interesting physical interpretation. 
 \begin{lemma}\label{sharp flat}
Let $f \in \mathcal{S}(\RM^N)$ and $V(x)=1/|x|$, $x \in \RM^N$, $x\neq0$ . Then
\begin{equation}
\I(\R{f})(x)=a_{N}\, f\ast V, 
\label{eq:electid}
\end{equation}
where $a_N$ depends only on the dimension $N$, and
$*$ denotes the convolution product,
\begin{equation}
(f\ast g)(x) =\int_{\RM^N} f(y)\, g(x-y) \; \d y  .
\end{equation}
\end{lemma}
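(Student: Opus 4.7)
The plan is to expand both operators explicitly in the $(X,\xi)$ parametrization, swap the order of integration, and evaluate the resulting integral over the unit sphere to recognize the Newtonian-type kernel $V(x)=1/|x|$.

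First I would write the Radon transform, for $\lambda = \{y : \xi\cdot y = X\}$, as
\begin{equation}
\R f(X,\xi)=\int_{\RM^N} f(y)\,\delta(X-\xi\cdot y)\,\d y,
\end{equation}
which is just a rewriting of \eqref{sharp lambda}. The back-projection in the same coordinates reduces to an average over the sphere: a hyperplane through $x$ is parametrized by its unit normal $\xi\in\SM^{N-1}$ together with $X=\xi\cdot x$, and the invariant probability measure $\d\mu$ pulls back to normalized surface measure, giving
\begin{equation}
\I g(x)=\frac{1}{|\SM^{N-1}|}\int_{\SM^{N-1}} g(\xi\cdot x,\xi)\,\d\xi.
\end{equation}
(The factor $2$ from the double cover cancels against restricting $\xi$ to a hemisphere.) Composing these two formulas and invoking Fubini (legitimate by the Schwartz decay of $f$) gives
\begin{equation}
\I(\R f)(x)=\frac{1}{|\SM^{N-1}|}\int_{\RM^N} f(y)\left(\int_{\SM^{N-1}}\delta\bigl(\xi\cdot(x-y)\bigr)\,\d\xi\right)\d y.
\end{equation}

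The core computation is then the inner spherical integral $I(z):=\int_{\SM^{N-1}}\delta(\xi\cdot z)\,\d\xi$ for $z=x-y\neq 0$. By rotational invariance it depends only on $r=|z|$, so I would align the polar axis with $z/|z|$ and use spherical coordinates with polar angle $\theta$, writing $\xi\cdot z=r\cos\theta$ and $\d\xi=|\SM^{N-2}|\sin^{N-2}\theta\,\d\theta$. Changing variables to $u=r\cos\theta$ converts the Jacobian into $(1-u^2/r^2)^{(N-3)/2}/r$, and the delta evaluates at $u=0$ to give $I(z)=|\SM^{N-2}|/|z|$. (The borderline case $N=2$ needs a brief separate check, where $\SM^0=\{\pm 1\}$ gives $I(z)=2/|z|$ directly.) Substituting back yields
\begin{equation}
\I(\R f)(x)=\frac{|\SM^{N-2}|}{|\SM^{N-1}|}\int_{\RM^N}\frac{f(y)}{|x-y|}\,\d y=a_N\,(f\ast V)(x),
\end{equation}
with $a_N=|\SM^{N-2}|/|\SM^{N-1}|$, which is the claim.

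The steps involving Fubini, the change from $(X,\xi)$ to spherical coordinates, and the identification of the result as a convolution are essentially bookkeeping once the integral representations are set up. The main obstacle is the spherical delta integral $I(z)$: one must justify collapsing $\delta(\xi\cdot z)$ against the singular Jacobian $(1-u^2/r^2)^{(N-3)/2}$, which is integrable near $u=0$ for all $N\geq 2$ but degenerates in form between $N=2$ and $N\geq 3$. Once the value $|\SM^{N-2}|/|z|$ is established, the lemma and the electrostatic interpretation (the kernel $1/|x-y|$ being the Newton potential in $\RM^3$ up to constant) follow immediately.
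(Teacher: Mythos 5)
Your argument is correct and is essentially the standard proof of this identity; note that the paper itself does not prove the lemma but defers it to~\cite{tomolectures}, where the same computation (integrate over the hyperplane, then over the sphere of normals, and reduce the spherical integral by rotational invariance to produce the kernel $1/|x-y|$) is carried out. The delta-function bookkeeping is a legitimate shorthand: $\int f(y)\delta(X-\xi\cdot y)\,\d y$ equals the surface integral over the hyperplane because $|\nabla_y(X-\xi\cdot y)|=1$, and your spherical reduction correctly yields $\int_{\SM^{N-1}}\delta(\xi\cdot z)\,\d\xi=|\SM^{N-2}|/|z|$, the Jacobian factor $(1-u^2/r^2)^{(N-3)/2}$ being continuous (equal to $1$) at $u=0$ for every $N\geq 2$, so the only genuine singularity sits at $u=\pm r$, away from the support of the delta. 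One small remark: the value of your constant, $a_N=|\SM^{N-2}|/|\SM^{N-1}|$, reflects the paper's definition of $\I$ via a rotation-invariant \emph{probability} measure; the paper's own inversion formula~(\ref{inversion formula}) implicitly uses the unnormalized spherical measure, which would give $a_N=|\SM^{N-2}|$ instead. Since the lemma leaves $a_N$ unspecified, this does not affect correctness, but it is worth being consistent if you go on to track constants through Theorem~\ref{thm:inversioneformula}.
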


\begin{figure}[t]
\includegraphics[width=0.45\textwidth]{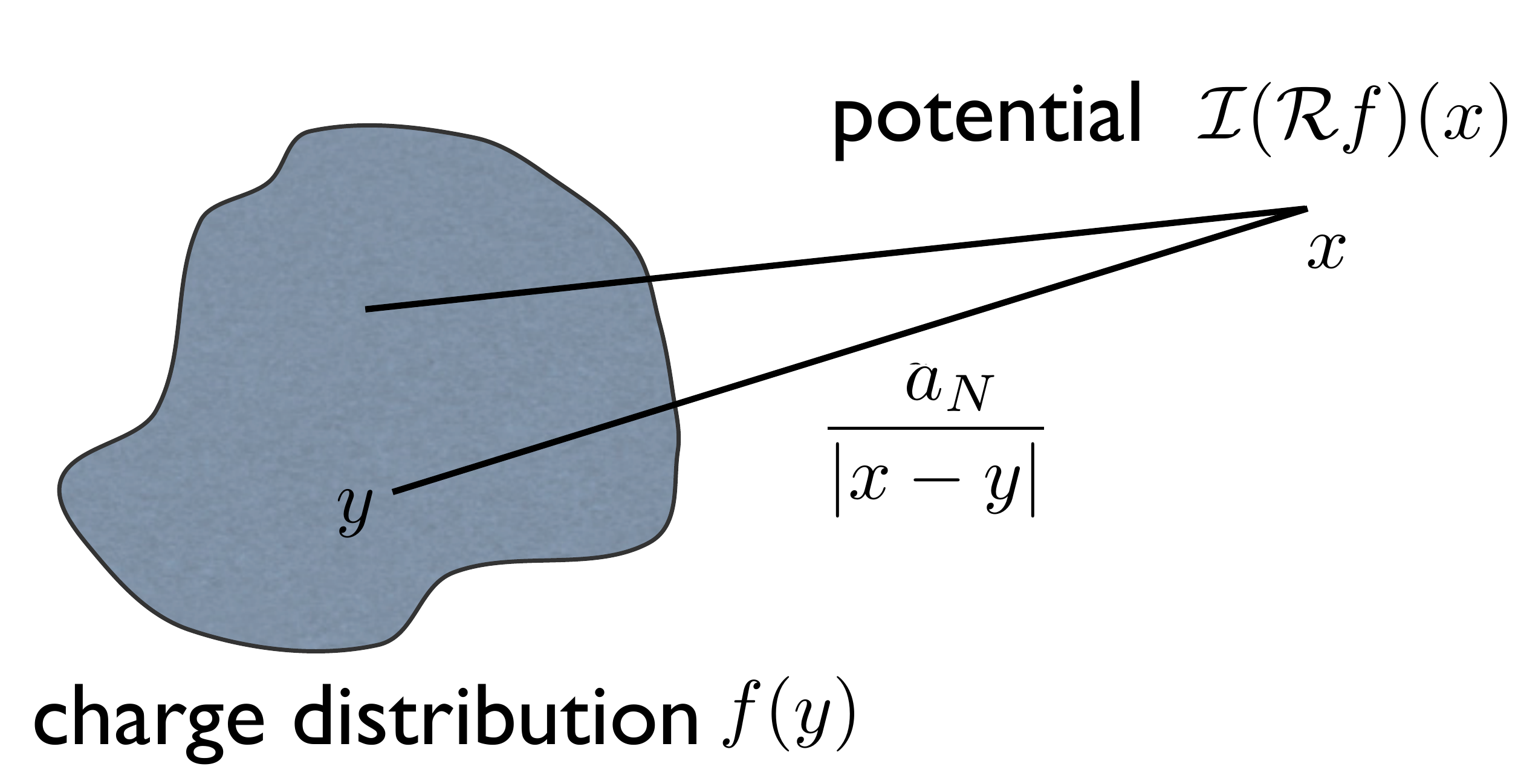}
\caption{$\I(\R{f})(x)$ is the potential at $x$ generated by the charge distribution $f$.}
\label{fig:potential}
\end{figure}

A  physical interpretation of Lemma~\ref{sharp flat} is the following: if $f$ is a charge distribution, then the potential at the point $x$ generated by that charge is exactly $\I(\R{f}(x))$, see Fig.~\ref{fig:potential}. Notice, however, that the potential of a point charge scales always as the inverse distance \emph{independently} of the dimension $N$, and thus it is Coulomb only for $N=3$. The only dependence on $N$ is in the strength of the elementary charge $a_N$. This fact is crucial: indeed, the associated Poisson equation involves an $N$-dependent (fractional) power of the Laplacian, which appears in the inversion formula for the Radon transform.
\begin{theorem}\label{thm:inversioneformula}
Let $f \in \mathcal{S}(\RM^N)$. Then
\begin{equation}\label{inversion formula0}
f(x)= \frac{1}{2(2\pi)^{N-1}}(-\Delta)^{\frac{N-1}{2}} \I(\R{f})(x)
\end{equation}
where $(-\Delta)^\alpha$, with $\alpha>0$, is a pseudodifferential operator whose action is
\begin{equation}
((-\Delta)^\alpha f) (x)= \int_{\RM^N} |k|^{2\alpha}\, \hat{f}(k)\,  \e^{\i k \cdot x}\; \frac{\d k}{(2\pi)^N},
\label{eq:fractionalLap}
\end{equation}
where $\hat{f}$ is the Fourier transform of $f$,
\begin{equation}
\hat{f}(k) = \int_{\RM^N} f(x)\,  \e^{-\i k \cdot x}\; \d x.
\label{eq:Fouriertrans}
\end{equation}
\end{theorem}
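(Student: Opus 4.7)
The plan is to apply the pseudodifferential operator $(-\Delta)^{(N-1)/2}$ to the ``electrostatic identity'' of Lemma~\ref{sharp flat} and then recognize that the resulting fractional power of the Coulomb-type kernel $V(x)=1/|x|$ is (up to a constant) a Dirac delta. From Lemma~\ref{sharp flat} we have $\I(\R f)=a_N (f*V)$, so it suffices to show that
\begin{equation}
(-\Delta)^{\frac{N-1}{2}}(f*V)(x)= c_N \, f(x)
\end{equation}
for a constant $c_N$, and then to check that $c_N/a_N = 2(2\pi)^{N-1}$, which rearranges to the claimed inversion formula.

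The natural tool is the Fourier representation \eqref{eq:fractionalLap} of the fractional Laplacian. Since $f\in\mathcal{S}(\RM^N)$, $\hat f$ decays rapidly and we may freely interchange the pseudodifferential operator with the convolution: formally,
\begin{equation}
(-\Delta)^{\frac{N-1}{2}}(f*V)(x)=\int_{\RM^N}|k|^{N-1}\hat f(k)\hat V(k)\,\e^{\i k\cdot x}\,\frac{\d k}{(2\pi)^N}.
\end{equation}
The Fourier transform of the tempered distribution $V(x)=|x|^{-1}$ is the standard Riesz potential identity $\hat V(k)=\gamma_N |k|^{-(N-1)}$ for an explicit $\gamma_N$ (expressible through $\Gamma$-functions and powers of $\pi$). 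Inserting this into the integrand the factors $|k|^{N-1}$ and $|k|^{-(N-1)}$ cancel, leaving
\begin{equation}
(-\Delta)^{\frac{N-1}{2}}(f*V)(x)=\gamma_N\int_{\RM^N}\hat f(k)\,\e^{\i k\cdot x}\,\frac{\d k}{(2\pi)^N}=\gamma_N f(x),
\end{equation}
by Fourier inversion. Combining with $\I(\R f)=a_N f*V$ yields $(-\Delta)^{(N-1)/2}\I(\R f)=a_N\gamma_N f$; isolating $f$ gives the stated formula, provided $a_N\gamma_N=2(2\pi)^{N-1}$.

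The routine but delicate part is the constant bookkeeping: one must carry out the $\Gamma$-function manipulations to evaluate $\gamma_N$ (from the Riesz potential $\widehat{|x|^{-1}}\propto |k|^{-(N-1)}$, whose proof uses analytic continuation in the exponent or a direct computation via the Gaussian integral trick $|x|^{-\alpha}=\Gamma(\alpha/2)^{-1}\int_0^\infty t^{\alpha/2-1}\e^{-t|x|^2}\d t$) and to recall $a_N$ from Lemma~\ref{sharp flat} in order to verify that the product $a_N\gamma_N$ really produces the coefficient $2(2\pi)^{N-1}$. The main conceptual obstacle is justifying the Fourier manipulations when $V\notin\mathcal{S}(\RM^N)$: strictly speaking one works in the space of tempered distributions, multiplies by the locally integrable symbol $|k|^{N-1}\hat V(k)=\gamma_N$, and uses that $\hat f\in\mathcal{S}(\RM^N)$ to make the pairings absolutely convergent. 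Once this is granted, the formula follows by pure symbol calculus.
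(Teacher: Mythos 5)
Correct, and essentially the same approach the paper intends: the paper states Lemma~\ref{sharp flat} precisely as the preliminary step for this theorem and then defers the details to~\cite{tomolectures}, where the inversion formula is obtained exactly as you do, by applying the Fourier-multiplier definition of $(-\Delta)^{\frac{N-1}{2}}$ to the Riesz-potential identity $\I(\R{f})=a_N\, f\ast V$ and cancelling $|k|^{N-1}$ against $\hat{V}(k)=\gamma_N |k|^{-(N-1)}$. The only item you leave pending is the constant check $a_N\gamma_N=2(2\pi)^{N-1}$, which the statement of Lemma~\ref{sharp flat} does not let you complete since $a_N$ is never specified; for the record, $\gamma_N=2^{N-1}\pi^{N/2}\,\Gamma(\frac{N-1}{2})/\Gamma(\frac{1}{2})$, and with the dual transform read as the unnormalized sphere integral appearing in Eq.~(\ref{inversion formula}) (rather than the rotation-invariant \emph{probability} average of Eq.~(\ref{dual map}), which would rescale $a_N$ by $1/|\SM^{N-1}|$) one has $a_N=|\SM^{N-2}|=2\pi^{(N-1)/2}/\Gamma(\frac{N-1}{2})$, so that $a_N\gamma_N=2^{N}\pi^{N-1}=2(2\pi)^{N-1}$ as required.
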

The proof of Theorem~\ref{thm:inversioneformula} can be found in~\cite{tomolectures}.
Equation~(\ref{inversion formula0}) says that, modulo the final action of $(-\Delta)^{\frac{N-1}{2}}$, the function $f$ can be recovered from its Radon transform $\R{f}$ by the application of the dual mapping $\I$: first one integrates over the set of points in a hyperplane and then one integrates over the set of hyperplanes passing through a given point. Explicitly we get
\begin{equation}\label{inversion formula}
f(x) =\frac{1}{2(2\pi)^{N-1}}(-\Delta)^{\frac{N-1}{2}}\int_{\SM^{N-1}}   \R{f}(\xi \cdot x, \xi)\; \d\xi,
\end{equation}
which has the following remarkable interpretation. Note that if one fixes a direction $\xi \in \SM^{N-1}$, then the function $\R{f}(\xi \cdot x,\xi)$  is constant on each plane perpendicular to $\xi$, i.e.\ it is a (generalized) plane wave. Therefore, Eq.~(\ref{inversion formula}) gives a representation of $f$ in terms of a continuous superposition of plane waves. A well-known analogous decomposition is given by Fourier transform. When $N=2$, one recovers the inversion formula~(\ref{eq:Radoninversion}) originally found by Radon \cite{Rad1917}.

\section{Instability of the inversion formula with noisy data}\label{Sec:ill}
We have defined the Radon transform of any function $f  \in \mathcal{S}(\RM^N)$ as $\R{f}$. The following theorem~\cite{Helgason}
 contains the characterization of the range of the Radon linear operator $\R$ and the extension of $\R$ to the space of square integrable functions $L^2(\RM^N)$.
\begin{theorem}
\label{th:Radonbijection}
The Radon transform $\R$ is a linear one-to-one mapping of $\mathcal{S}(\RM^N)$ onto $\mathcal{S}_H(\PM^N)$, where the space  $\mathcal{S}_H(\PM^N)$ is defined as follows: $g \in \mathcal{S}_H(\PM^N)$ if and only if $g \in \mathcal{S}(\PM^N)$ and for any integer $k \in \NM$ the integral
\begin{equation}
\int_{\RM} g(X,\xi)\, X^k \; \d X
\end{equation}
is a homogeneous polynomial of degree $k$ in $\xi_1, \dots, \xi_N$. Moreover, the Radon operator $\R$ can be extended to a continuous operator from $L^2(\RM^N)$ and $L^2(\PM^{N})$.
\end{theorem}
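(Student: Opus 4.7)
The plan rests on the Fourier slice theorem, which I would establish first as the fundamental bridge between $\R$ and the Fourier transform. For $f \in \mathcal{S}(\RM^N)$, applying Fubini to unfold the inner hyperplane integral yields
\begin{equation*}
\int_\RM (\R f)(X,\xi)\, \e^{-\i s X}\, \d X = \int_{\RM^N} f(x)\, \e^{-\i s\, \xi \cdot x}\, \d x = \hat f(s\xi),
\end{equation*}
so the one-dimensional Fourier transform of $\R f$ in the $X$ variable is just the restriction of $\hat f$ to the line $\RM\xi$. Injectivity on $\mathcal{S}(\RM^N)$ is then immediate: $\R f = 0$ forces $\hat f \equiv 0$ on $\RM^N$, hence $f = 0$.

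Next I would establish the inclusion $\R(\mathcal{S}(\RM^N)) \subset \mathcal{S}_H(\PM^N)$. The parity $(\R f)(-X,-\xi) = (\R f)(X,\xi)$ is a direct consequence of the parametrization. Rapid decrease and smoothness follow from inverting the slice formula, $(\R f)(X,\xi) = \frac{1}{2\pi}\int_\RM \hat f(s\xi)\,\e^{\i s X}\,\d s$, combined with the Schwartz property of $\hat f$. The homogeneity condition is obtained by swapping integrals:
\begin{equation*}
\int_\RM (\R f)(X,\xi)\, X^k\, \d X = \int_{\RM^N} f(x)\, (\xi \cdot x)^k\, \d x,
\end{equation*}
and expanding $(\xi\cdot x)^k$ exhibits this as a homogeneous polynomial of degree $k$ in $\xi_1,\dots,\xi_N$.

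For the reverse inclusion $\mathcal{S}_H(\PM^N) \subset \R(\mathcal{S}(\RM^N))$, given $g \in \mathcal{S}_H(\PM^N)$, I would set $\tilde g(s,\xi) := \int_\RM g(X,\xi)\,\e^{-\i s X}\,\d X$ and define $F(k) := \tilde g(|k|,\, k/|k|)$ on $\RM^N \setminus \{0\}$; the parity of $g$ makes $F$ well defined, and away from the origin it is manifestly smooth and rapidly decreasing. The crucial task is to extend $F$ smoothly across $k=0$. I would Taylor-expand $\tilde g(s,\xi)$ in $s$ at $s=0$; the $k$-th coefficient equals $(-\i)^k/k!$ times $\int_\RM g(X,\xi) X^k \d X$, which by assumption is a homogeneous polynomial $p_k(\xi)$ of degree $k$, so $s^k p_k(\xi) = p_k(s\xi)$ is genuinely polynomial in the Cartesian variable $k = s\xi$. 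Reassembling these terms and controlling remainders via the rapid decay of $g$ in $X$ yields $F \in \mathcal{S}(\RM^N)$. Setting $f := \mathcal{F}^{-1}F$ and reading the slice identity in reverse gives $\R f = g$.

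The $L^2$ extension would follow by density of $\mathcal{S}(\RM^N)$ in $L^2(\RM^N)$, together with a continuity estimate derived from Plancherel applied slice by slice: $\|\widetilde{\R f}(\cdot,\xi)\|_{L^2(\RM)}^2 = \|\hat f(\cdot\,\xi)\|_{L^2(\RM)}^2$, which integrated over $\SM^{N-1}$ with the appropriate (possibly weighted) measure on $\PM^N$ controls $\R f$ by $f$. I expect the main obstacle to be the smoothness-at-the-origin step in the surjectivity argument: the homogeneity of the moments is precisely the condition that lets the polar representation $\tilde g(|k|,k/|k|)$ extend smoothly to all of $\RM^N$, but matching the Taylor polynomials in $(s,\xi)$ to polynomials in $k$ requires a careful inductive verification that each order of the expansion is consistent between the two coordinate systems.
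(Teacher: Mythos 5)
The paper itself offers no proof of this theorem --- it is quoted verbatim from Helgason's book --- so your proposal has to be judged on its own terms. In outline it is exactly the classical argument for the Schwartz range theorem: the projection--slice identity $\widetilde{\R f}(s,\xi)=\hat f(s\xi)$ gives injectivity and the forward inclusion, and surjectivity is obtained by reassembling a candidate $\hat f$ from $\tilde g$ in polar coordinates, the homogeneity of the moments being precisely the condition that lets $F(k)=\tilde g(|k|,k/|k|)$ extend smoothly across $k=0$. That part is sound, and you correctly isolate the smooth-extension step as the real analytic work (it does require the inductive matching of Taylor coefficients you describe, together with uniform control of the $\xi$-derivatives to get rapid decay of \emph{all} derivatives of $F$).

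The genuine gap is in the final $L^2$ claim. Plancherel applied slice by slice gives, up to a constant,
\begin{equation}
\int_{\SM^{N-1}}\int_{\RM}|\R f(X,\xi)|^2\,\d X\,\d\xi
\;=\;\frac{1}{\pi}\int_{\RM^N}|\hat f(k)|^2\,|k|^{1-N}\,\d k ,
\end{equation}
and the weight $|k|^{1-N}$ is singular at the origin: taking $\hat f$ to be a smoothed indicator of the ball of radius $\epsilon$ gives $\|\R f\|_{L^2(\PM^N)}/\|f\|_{L^2(\RM^N)}\sim \epsilon^{(1-N)/2}\to\infty$. Hence $\R$ is \emph{not} bounded from all of $L^2(\RM^N)$ into the unweighted $L^2(\PM^N)$, and the phrase ``with the appropriate (possibly weighted) measure'' cannot be left unresolved --- it is the whole content of this part of the statement. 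The claim is true in the setting the paper actually works in, namely for functions supported in a fixed bounded set $D$: there $|\hat f(k)|\le\|f\|_{L^1(D)}\le \mathcal{L}^N(D)^{1/2}\|f\|_{L^2(D)}$, the weight $|k|^{1-N}$ is locally integrable in $\RM^N$, and splitting the frequency integral at $|k|=1$ yields $\|\R f\|_{L^2(\PM^N)}\le C_D\|f\|_{L^2(D)}$, after which density of $\mathcal{S}$ finishes the extension. You should either impose this support restriction explicitly or replace the target by a weighted $L^2$ (equivalently, a Sobolev space of order $(N-1)/2$); as written, the continuity estimate you invoke does not exist.
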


In medical imaging, computerized tomography is a widely used technique
for the determination of the  density $f$ of a sample from measurements
of the attenuation of X-ray beams sent through the material along
different angles and offsets. The measured data $g$ are connected to the density
$f$ via the Radon Transform $\R{}$.
To compute the density distribution $f$ the equation $g = \R{f}$
has to be inverted. Unfortunately it is a well known fact that $\R$ is not continuously invertible on $L^2(\PM^{N})$~\cite{Helgason}, and this imply that the problem of inversion is ill-posed.
For this reason, regularization methods have to be introduced to stabilize the inversion in the presence of data noise.

We discuss ill-posed problems only in the framework of linear problems in
Hilbert spaces~\cite{Natterer}. Let $\mathscr{H}, \mathscr{K}$ be Hilbert spaces and let $A$ be a linear bounded operator
from $\mathscr{H}$ into $\mathscr{K}$. The problem
\begin{equation}\label{prob:inverseproblem}
\textrm{given $g\in \mathscr{K}$, find $f \in \mathscr{H}$ such that $Af=g$}
\end{equation}
is called well-posed by Hadamard (1932) if it is uniquely solvable for each $g \in  \mathscr{K}$
and if the solution depends continuously on $g$. Otherwise, (\ref{prob:inverseproblem}) is called ill-posed.
This means that for an ill-posed problem the operator $A^{-1}$ either does not exist,
or is not defined on all of $\mathscr{K}$, or is not continuous. The practical difficulty with an
ill-posed problem is that even if it is solvable, the solution of $Af = g$ need not be
close to the solution of $Af = g^\epsilon$ if $g^\epsilon$ is close to $g$. 

In general $A^{-1}$ is not a continuous operator. To restore continuity we introduce the notion of a regularization of $A^{-1}$. This is a family $(T_\gamma )_{\gamma >0}$ of linear continuous operators $T_\gamma: \mathscr{K} \to  \mathscr{H}$ which are defined on all  $\mathscr{K}$ and for which
\begin{equation}
\lim_{\gamma \to 0} \; \;T_\gamma g= A^{-1} g
\end{equation}
on the domain of $A^{-1}$. Obviously $\|T_\gamma\| \to + \infty$ as $\gamma \to 0$ if $A^{-1}$ is not bounded. With the help of a regularization we can solve (\ref{prob:inverseproblem}) approximately in the following sense. Let $g^\epsilon \in  \mathscr{K}$ be an approximation to $g$ such that $\|g -g^\epsilon\| \leq \epsilon$ . Let $\gamma(\epsilon)$ be such that, as $\epsilon \to 0$,
\begin{equation}
\gamma(\epsilon) \to 0, 
\qquad  \|T_{\gamma(\epsilon)}\| \epsilon \to 0.
\end{equation}
Then, as $\epsilon \to 0$,
\begin{eqnarray}
\|T_{\gamma(\epsilon)} g ^\epsilon -A^{-1} g\|  & \leq & \|T_{\gamma(\epsilon)} (g ^\epsilon-g) \|  
\nonumber\\ & & 
+ \| (T_{\gamma(\epsilon)}-A^{-1}) g\| \nonumber \\
                                                                      & \leq &   \|T_{\gamma(\epsilon)}\| \epsilon +  \| (T_{\gamma(\epsilon)}-A^{-1} )g\|  \to 0. \nonumber \\
\end{eqnarray}
Hence, $T_{\gamma(\epsilon)} g^\epsilon$ is close to $A^{-1} g$ if $g^\epsilon$ is close to $g$. The number $\gamma$ is called a \emph{regularization parameter}. Determining a good
regularization parameter is one of the crucial points in the application of regularization methods.

There are several methods for constructing a regularization as the truncated singular value decomposition, the method of Tikhonov-Phillips  or  some iterative methods~\cite{Natterer}. In the following section we present a regularization method based on the minimization of a Mumford-Shah type functional.

\section{Mumford-Shah functional for the simultaneous segmentation and reconstruction of a function}\label{Sec:MS}

In many practical applications one is not only interested in the reconstruction
of the density distribution $f$ but also in the extraction of some specific
features within the image which represents the density distribution of the
sample. For example, the planning of surgery might require the determination
of the boundaries of inner organs like liver or lung or the separation
of cancerous and healthy tissue.
Segmenting a digital image means finding its \emph{homogeneous regions} and its \emph{edges}, or \emph{boundaries}. Of course, the homogeneous regions are supposed to correspond to meaningful parts of objects in the real world, and the edges to their apparent contours. The Mumford-Shah variational model is one of the principal models of image segmentation. It defines the segmentation problem as a joint smoothing/edge detection problem: given an image $g(x)$, one seeks simultaneously a ``piecewise smoothed image'' $u(x)$ with a set $\Gamma$ of abrupt discontinuities, the ``edges'' of $g$. 
The original Mumford-Shah functional~\cite{MS}, is the following:
\begin{eqnarray}\label{defn:JMS}
\MS(\Gamma,u)&=&\|u-g\|_{L^2(D)}^2+\alpha \int_{D \setminus \Gamma} |\nabla u(x)|^2\; \textrm{d}x \nonumber \\
                       && + \beta \mathcal{H}^{N-1}(D \cap \Gamma),
\end{eqnarray}
where 
\begin{itemize}
\item $D \subset \RM^N$ is an open set (\emph{screen});
\item $\Gamma \subset \RM^N$ is a closed set (\emph{set of edges});
\item $u: D \to \RM$ (\emph{cartoon});
\item $\nabla u$ denotes the distributional gradient of $u$;
\item $g \in L^2(D)$ is the datum (\emph{digital image});
\item $\alpha, \beta >0$ are parameters (\emph{tuning parameters});
\item $\mathcal{H}^{N-1}$ denotes the $(N-1)$-dimensional Hausdorff measure. 
\end{itemize}
The squared $L^2$ distance in (\ref{defn:JMS}) plays the role of a fidelity term: it imposes that the cartoon $u$ approximate the image $g$. The second  term in the functional imposes that the cartoon $u$ be piecewise smooth outside the edge set $\Gamma$. In other word this term  favors sharp contours rather than zones where a thin layer of gray is used to pass smoothly from white to black or viceversa.  Finally the third term in the functional imposes that the contour $\Gamma$ be ``small''  and as smooth as possible. What is expected from the minimization of this functional is a sketchy, cartoon-like version of the given  image together with its contours. See Fig.~\ref{fig:cartooneye}.

The minimization of the $MS$ functional represents a compromise between accuracy and segmentation. The compromise depends on the tuning parameters $\alpha$ and $\beta$
which  have different roles. The parameter $\alpha$ determines how much the cartoon $u$ can vary, if $\alpha$ is small some variations of $u$ are allowed, while as $\alpha$ increases $u$ tends to be a piecewise constant function. The parameter $\beta$ represents a scale parameter of the functional and measure the amount of contours: if $\beta$ is small, a lot of edges are allowed and we get a fine segmentation. As $\beta$ increases, the segmentation gets coarser. 
For more details on the model see the original paper~\cite{MS}, and the book~\cite{JMMS}.

\begin{figure}[t]
\includegraphics[width=0.45\textwidth]{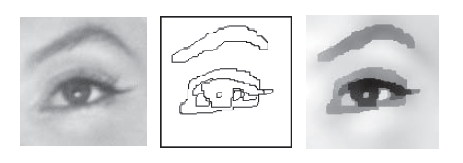}
\caption{Left:  image of an eye ($g$). Center: contours of the image in the Mumford-Shah model (edges $\Gamma$). Right: piecewise smooth function approximating the image (cartoon $u$) \cite{Yuille}.}
\label{fig:cartooneye}
\end{figure}
The minimization of the $\MS$ functional  in (\ref{defn:JMS}) is performed among the admissible pairs $(\Gamma,u)$ such that $\Gamma$ is closed and $u \in C^1(D \setminus \Gamma)$. It is worth noticing that in this model there are two unknowns: a scalar function $u$ and the set $\Gamma$ of its discontinuities. For this reason this category of problems is often called ``Free Discontinuities Problem''. Existence of minimizers of the $\MS$ functional  in~(\ref{defn:JMS}) was proven by De Giorgi, Carriero, Leaci in~\cite{DGCL} in the framework of bounded variation functions without Cantor part (space SBV) introduced by Ambrosio and De Giorgi in~\cite{DGA}.  Further regularity properties for optimal segmentation in the Mumford-Shah model were shown in~\cite{DMMS,AmbFusPal,LMS, BD}.

Here we present a variation of the MS functional, adapted to the inversion problem of the Radon transform. More precisely, we consider a regularization method that quantifies the edge sets together with images, i.e. a procedure  that  gives simultaneously a reconstruction and a segmentation of $f$ (assumed to be  supported in $D\subset\RM^N$) directly from the measured tomograms $g$, based on the minimization of the Mumford-Shah type functional 
\begin{eqnarray}\label{def:MSfunctional}
J_{\MS}(\Gamma,f)&=&\|\R{f}-g\|_{L^2(\PM^N)}^2 \nonumber \\
                          &  &+\alpha \int_{D \setminus \Gamma} |\nabla f(x)|^2\; \textrm{d}x
 + \beta \mathcal{H}^{N-1}(\Gamma).
 \end{eqnarray}
The only difference between the functionals $\MS$ and $J_{\MS}$ is the first term, i.e. the fidelity term, that ensures that the reconstruction for  $f$ is close enough to a solution of the equation $\R{f} = g$, whereas the other terms play exactly the same role explained for the functional $\MS$. As  explained above, in addition to the reconstruction of the density $f$, we are  interested in the reconstruction of its singularity set $\Gamma$,
i.e. the set of points where the solution $f$ is discontinuous. The main difference with respect to the standard Mumford-Shah functional~(\ref{defn:JMS}) is that we have to translate the information about the set of sharp discontinuities of $g$ (and hence on the space of the Radon transform) into information about the strong discontinuities of $f$.

\subsection{The piecewise constant model}\label{Subsec:MS}

Here we will review the results obtained by Ramlau and Ring~\cite{RamlauRing1} concerning the minimization of  (\ref{def:MSfunctional})  restricted to piecewise constant functions $f$, and then consider some interesting open problems.
For medical applications, it is often a good approximation to
restrict the reconstruction to densities $f$ that are constant with respect to a
partition of the body, as the tissues of inner organs, bones, or muscles have
approximately constant density.

We introduce the space $PC_m(D)$ as the space of piecewise constant functions that attain at most $m$ different function values, where $D$ is an open and bounded subset of $\RM^N$. In other words, each $f \in PC_m(D)$ is a linear combination of $m$ characteristic
functions $\chi_{\Omega_k}$ of sets $(\Omega_k)_{k=1,\dots,m}$ which satisfy
$$
\sum_{k=1}^m\chi_{\Omega_k}=\chi_{D} \qquad \mathrm{a.e.}
$$
We assume that the $\Omega_k$'s are open relatively to $D$ and we set $\Gamma_k=\partial \Omega_k$ for the boundary of $\Omega_k$ with respect to the topology relative to the open domain $D$.
In this situation the edge set will be given by the union of the boundaries of $\Omega_k$'s.  For technical reasons it is necessary to assume a \emph{nondegeneracy} condition on the admissible partitions of $D$:  
\begin{equation}\label{nondegeneracy-cond}
\bm{\Omega}=(\Omega_k)_{k=1,\dots,m}\quad \textrm{is admissible if}\quad  \mathcal{L}^N(\Omega_k)\geq \delta,
\end{equation}
for some $\delta >0$, for all $k=1,\dots,m$, where $\mathcal{L}^{N}$ denotes the Lebesgue measure on $\RM^N$.

It turns out to be convenient to split the information encoded in a typical function,
\begin{equation}
f=\sum_{k=1}^m f_k\chi_{\Omega_k} \in PC_m(D),
\end{equation}
into a ``geometrical'' part described by the $m$-tuple of pairwise disjoint
sets $\bm{\Omega}=(\Omega_k)_{k=1,\dots,m}$ which cover $D$ up to a set of measure zero and a ``functional'' part given by the $m$-tuple of values $\bm{f}=(f_k)_{k=1,\dots,m}$. We also use the notation  $\bm{\Gamma}=(\Gamma_k)_{k=1,\dots,m}$, for the boundaries $\Gamma_k= \partial \Omega_k$ of $\Omega_k$.

As usual when dealing with inverse problems, we have to assume that the data $g$ are not exactly known, but that we are only given noisy measured tomograms $g^\epsilon$ of a (hypothetical) exact data set $g$ with $\|g^\epsilon-g\|_{L^2(\PM^N)} \leq \epsilon$.

If we restrict the functional (\ref{def:MSfunctional}) to functions in $PC_m(D)$ we obtain that the second term (involving the derivatives of $f$) disappears, therefore it remains to minimize the functional
\begin{equation}\label{functional:Jbeta}
J_\beta(\bm{\Omega},\bm{f})=\|\R{f}-g^\epsilon\|^2_{L^2(\PM^N)}+\beta \sum_{k=1}^m \mathcal{H}^{N-1}(\Gamma_k),
\end{equation}
over $PC_m(D)$, with respect to the functional variable $\bm{f}$ (a vector of $m$ components) and the geometric variable $\bm{\Omega}$ (a partition of the domain $D$ with at most $m$ distinct regions satisfying the non degeneracy condition (\ref{nondegeneracy-cond})).
So the problem is to find $\tilde{f} \in PC_m(D)$ such that
\begin{equation}\label{eqn:minimizerJbeta}
\tilde{f}=\sum_{k=1}^m \tilde{f}_k\chi_{\tilde{\Omega}_k} \in PC_m(D),
\end{equation}
where
\begin{equation}\label{argmin:Jbeta}
(\tilde{\bm{\Omega}},\tilde{\bm{f}})  = \textrm{arg}  \min_{(\bm{\Omega} , \bm{f})}\; J_\beta(\bm{\Gamma},\bm{f}).
\end{equation}
It is clear that $\tilde{f}$ will depend on  the regularization parameter $\beta$ and on the error level $\epsilon$.

Now we can state the results concerning the functional $J_\beta$ in (\ref{functional:Jbeta}). There are several  technical details necessary for the precise statement and proof of the theorems, for which we refer to the original paper~\cite{RamlauRing1}.
Here we will give a simplified version of the theorems with the purpose of explain the main goal, without too many technical details.  The first result is about  the existence of minimizers of the functional $J_\beta$ in  (\ref{functional:Jbeta}).
\begin{theorem}
For all $g^\epsilon \in L^2(\PM^N)$ there exists a  minimizer  $(\tilde{\bm{\Omega}}^\epsilon_\beta,\tilde{\bm{f}}_\beta^\epsilon)$ of the functional $J_\beta$ in (\ref{functional:Jbeta}), with $\beta>0$.
\end{theorem}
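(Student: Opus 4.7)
The plan is to apply the direct method of the calculus of variations. Fix a minimizing sequence $(\bm\Omega^n,\bm f^n)\subset PC_m(D)$, i.e.\ $J_\beta(\bm\Omega^n,\bm f^n)\to\inf J_\beta$, so that both $\sum_{k=1}^m\mathcal{H}^{N-1}(\Gamma_k^n)$ and $\|\R f^n-g^\epsilon\|_{L^2(\PM^N)}$ are uniformly bounded in $n$. The idea is to use the perimeter term to extract a converging partition, exploit the fidelity bound together with the injectivity of $\R$ (Theorem~\ref{th:Radonbijection}) to bound the value vectors $\bm f^n$, and finally conclude by lower semicontinuity.

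For the partitions, each $\chi_{\Omega_k^n}$ is uniformly bounded in $BV(D)$ since $\|\chi_{\Omega_k^n}\|_{BV(D)}\leq\mathcal{L}^N(D)+\mathcal{H}^{N-1}(\Gamma_k^n)$. By the standard $BV$ compactness theorem, up to a subsequence $\chi_{\Omega_k^n}\to\chi_{\Omega_k}$ in $L^1(D)$ for every $k$, with each $\Omega_k$ of finite perimeter. The nondegeneracy $\mathcal{L}^N(\Omega_k^n)\geq\delta$, the covering relation $\sum_k\chi_{\Omega_k^n}=\chi_D$, and the pairwise disjointness $\chi_{\Omega_j^n}\chi_{\Omega_k^n}=0$ all pass to the $L^1$-limit, so $\bm\Omega:=(\Omega_k)$ is admissible in the sense of~(\ref{nondegeneracy-cond}).

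The hard part will be to bound $\bm f^n$, because $\R^{-1}$ is not continuous on $L^2(\PM^N)$ (cf.\ Section~\ref{Sec:ill}). I would argue by contradiction: assume $M_n:=\max_k|f_k^n|\to\infty$ along a subsequence, and set $\tilde f^n:=f^n/M_n$, so that $\max_k|\tilde f_k^n|=1$ while $\|\R\tilde f^n\|_{L^2(\PM^N)}\leq C/M_n\to 0$. Since $\tilde{\bm f}^n$ lies in a compact subset of $\RM^m$, a further subsequence gives $\tilde{\bm f}^n\to\tilde{\bm f}$ with $\max_k|\tilde f_k|=1$; combining with $\chi_{\Omega_k^n}\to\chi_{\Omega_k}$ in $L^1$ and the uniform $L^\infty$-bound $|\tilde f^n|\leq 1$, one obtains $\tilde f^n\to\tilde f:=\sum_k\tilde f_k\chi_{\Omega_k}$ in $L^2(D)$. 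Continuity of $\R$ on $L^2$ (Theorem~\ref{th:Radonbijection}) then gives $\R\tilde f^n\to\R\tilde f$ in $L^2(\PM^N)$, so $\R\tilde f=0$; injectivity of $\R$ yields $\tilde f=0$ a.e., which together with the nondegeneracy $\mathcal{L}^N(\Omega_k)\geq\delta>0$ and the (limiting) disjointness of the $\Omega_k$'s forces $\tilde{\bm f}=0$, contradicting $\max_k|\tilde f_k|=1$.

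Knowing now that $\bm f^n$ is bounded, extracting one more subsequence yields $\bm f^n\to\bm f$ in $\RM^m$. The convergence argument of the previous paragraph, applied without rescaling, gives $f^n\to f:=\sum_k f_k\chi_{\Omega_k}$ in $L^2(D)$, hence $\|\R f^n-g^\epsilon\|_{L^2(\PM^N)}^2\to\|\R f-g^\epsilon\|_{L^2(\PM^N)}^2$. Lower semicontinuity of the perimeter along $L^1$-convergent characteristic functions provides $\mathcal{H}^{N-1}(\Gamma_k)\leq\liminf_n\mathcal{H}^{N-1}(\Gamma_k^n)$, so $J_\beta(\bm\Omega,\bm f)\leq\liminf_n J_\beta(\bm\Omega^n,\bm f^n)=\inf J_\beta$, proving that $(\bm\Omega,\bm f)$ is a minimizer of $J_\beta$.
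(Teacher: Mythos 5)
The paper itself does not prove this theorem --- it explicitly defers the precise statement and proof to Ramlau and Ring \cite{RamlauRing1} --- so the comparison is with the argument there. Your proof is the standard direct-method argument and is essentially the one used in that reference: the perimeter bound along a minimizing sequence gives $BV$-compactness of the partition; the nondegeneracy condition (\ref{nondegeneracy-cond}) together with injectivity and $L^2$-continuity of $\R$ (valid here because all competitors are supported in the bounded set $D$) gives coercivity in the value vectors $\bm{f}^n$; and one concludes by lower semicontinuity. In particular, your rescaling/contradiction argument for bounding $\bm{f}^n$ is exactly the role the condition $\mathcal{L}^N(\Omega_k)\geq\delta$ is designed to play, and you use it correctly; passing the covering and disjointness relations to the $L^1$-limit along an a.e.\ convergent subsequence is also fine.

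The one step that is not correct as written is the last one. Lower semicontinuity under $L^1$ convergence of characteristic functions holds for the \emph{perimeter} $P(\Omega_k;D)=|D\chi_{\Omega_k}|(D)=\mathcal{H}^{N-1}(\partial^*\Omega_k\cap D)$, i.e.\ for the reduced boundary, not for $\mathcal{H}^{N-1}(\partial\Omega_k)$ with $\partial$ the topological boundary as in the statement ($\Omega_k$ open, $\Gamma_k=\partial\Omega_k$). The $L^1$-limit $\Omega_k$ is only an equivalence class modulo null sets; for any representative one only knows $\mathcal{H}^{N-1}(\partial\Omega_k)\geq P(\Omega_k;D)$, which is the wrong direction, and in general the topological boundary of a set of finite perimeter can have strictly larger, even infinite, $\mathcal{H}^{N-1}$ measure. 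So either the edge term in $J_\beta$ must be interpreted from the outset as the $BV$ perimeter --- which is what \cite{RamlauRing1} does, and then your proof closes --- or one needs an additional density lower bound/regularity argument in the spirit of De Giorgi--Carriero--Leaci \cite{DGCL} to produce a representative of the limit partition whose topological boundary coincides $\mathcal{H}^{N-1}$-a.e.\ with the reduced boundary. As stated, your inequality $\mathcal{H}^{N-1}(\Gamma_k)\leq\liminf_n\mathcal{H}^{N-1}(\Gamma_k^n)$ has a genuine gap at exactly this point.
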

The second result regards the stable dependence of the minimizers of the  functional $J_\beta$ in (\ref{functional:Jbeta}) on the error level $\epsilon$.
\begin{theorem}
Let $(g^{\epsilon_n})_{n \in \NM}$ be a sequence of functions in $L^2(\PM^N)$ and let $g^\epsilon \in L^2(\PM^N)$. For all $n \in \NM$, let $(\tilde{\bm{\Omega}}^{\epsilon_n}_\beta,\tilde{\bm{f}}_\beta^{\epsilon_n})$ denote the minimizers of the functional $J_\beta$  with initial data $g^{\epsilon_n}$. If $g^{\epsilon_n} \to g^\epsilon$ in $L^2(\PM^N)$, as $n \to +\infty$, then there exists a subsequence of $(\tilde{\bm{\Omega}}^{\epsilon_n}_\beta,\tilde{\bm{f}}_\beta^{\epsilon_n})$ such that 
$$
(\tilde{\bm{\Omega}}^{\epsilon_{n_j}}_\beta,\tilde{\bm{f}}_\beta^{\epsilon_{n_j}}) \to (\tilde{\bm{\Omega}}^\epsilon_\beta,\tilde{\bm{f}}_\beta^\epsilon) 
$$
as $j \to + \infty$, and $(\tilde{\bm{\Omega}}^\epsilon_\beta,\tilde{\bm{f}}_\beta^\epsilon) $ is a minimizer of $J_\beta$ with initial data $g^\epsilon$. Moreover, the limit of each convergent subsequence of $(\tilde{\bm{\Omega}}^{\epsilon_n}_\beta,\tilde{\bm{f}}_\beta^{\epsilon_n})$ is a minimizer of $J_\beta$ with initial data $g^\epsilon$.
\end{theorem}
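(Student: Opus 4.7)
The plan is to apply the direct method of the calculus of variations: use the minimality of $(\tilde{\bm{\Omega}}^{\epsilon_n}_\beta,\tilde{\bm{f}}_\beta^{\epsilon_n})$ to extract uniform bounds, invoke compactness to pass to a limit along a subsequence, and finally combine lower semicontinuity of $J_\beta$ with continuity of $\R$ and of the data to identify the limit as a minimizer for $g^\epsilon$.

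First I would establish equiboundedness. Testing the minimizer against any fixed admissible pair $(\bm{\Omega}_0,\bm{f}_0)$, for instance the trivial configuration with $\bm{f}_0=0$, gives
\begin{equation}
J_\beta(\tilde{\bm{\Omega}}^{\epsilon_n}_\beta,\tilde{\bm{f}}_\beta^{\epsilon_n}) \le J_\beta(\bm{\Omega}_0,\bm{f}_0) = \|g^{\epsilon_n}\|_{L^2(\PM^N)}^2.
\end{equation}
Since $g^{\epsilon_n}\to g^\epsilon$ in $L^2(\PM^N)$, the right-hand side is uniformly bounded, which controls simultaneously the fidelity term $\|\R\tilde{f}_\beta^{\epsilon_n}-g^{\epsilon_n}\|_{L^2(\PM^N)}^2$ and the total boundary measure $\sum_{k=1}^m \mathcal{H}^{N-1}(\partial\tilde{\Omega}^{\epsilon_n}_{k,\beta})$.

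Next I would pass to a compact limit. The uniform perimeter bound together with the nondegeneracy condition~(\ref{nondegeneracy-cond}) places each family $\{\chi_{\tilde{\Omega}^{\epsilon_n}_{k,\beta}}\}_n$ in a relatively compact subset of $BV(D)$, so along a diagonal subsequence $\chi_{\tilde{\Omega}^{\epsilon_n}_{k,\beta}}\to \chi_{\tilde{\Omega}^\epsilon_{k,\beta}}$ in $L^1(D)$ for each $k$, and the limit sets have Lebesgue measure at least $\delta$, forming an admissible partition. For the functional variable, the uniform fidelity bound combined with the nondegeneracy prevents the heights $\tilde{f}_{k,\beta}^{\epsilon_n}$ from blowing up (one of the technical points worked out in \cite{RamlauRing1}), so along a further subsequence $\tilde{f}_{k,\beta}^{\epsilon_n}\to\tilde{f}_{k,\beta}^\epsilon$ in $\RM$. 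Combining the two convergences yields $\tilde{f}_\beta^{\epsilon_n}\to\tilde{f}_\beta^\epsilon$ strongly in $L^2(D)$, and the continuity of the extended Radon operator granted by Theorem~\ref{th:Radonbijection} gives $\R\tilde{f}_\beta^{\epsilon_n}\to\R\tilde{f}_\beta^\epsilon$ in $L^2(\PM^N)$.

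Finally I would close the argument by semicontinuity and comparison. The perimeter contribution is $L^1$-lower semicontinuous by the De Giorgi--Ambrosio theory of sets of finite perimeter, while the fidelity term is jointly continuous in $(f,g)$; hence
\begin{equation}
J_\beta(\tilde{\bm{\Omega}}^\epsilon_\beta,\tilde{\bm{f}}_\beta^\epsilon)\le \liminf_{n\to\infty} J_\beta(\tilde{\bm{\Omega}}^{\epsilon_n}_\beta,\tilde{\bm{f}}_\beta^{\epsilon_n}),
\end{equation}
the left-hand side evaluated with data $g^\epsilon$ and the right-hand one with $g^{\epsilon_n}$. For any admissible competitor $(\bm{\Omega},\bm{f})$ minimality at level $n$ gives $J_\beta(\tilde{\bm{\Omega}}^{\epsilon_n}_\beta,\tilde{\bm{f}}_\beta^{\epsilon_n})\le J_\beta(\bm{\Omega},\bm{f})$ with data $g^{\epsilon_n}$, and the right-hand side converges to $J_\beta(\bm{\Omega},\bm{f})$ with data $g^\epsilon$; chaining this with the semicontinuity inequality yields $J_\beta(\tilde{\bm{\Omega}}^\epsilon_\beta,\tilde{\bm{f}}_\beta^\epsilon)\le J_\beta(\bm{\Omega},\bm{f})$, so the limit is indeed a minimizer with data $g^\epsilon$. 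Running the same extraction starting from any other convergent subsequence establishes the final assertion. The main obstacle is the compactness step for the geometric variable: one must verify that the $L^1$-limit of each characteristic function is again the characteristic function of a relatively open subset of $D$, that the limiting family still covers $D$ up to a null set, and that the nondegeneracy threshold $\delta$ is inherited in the limit without mass migration collapsing a component. This is precisely where the technical apparatus of \cite{RamlauRing1} carries the weight of the proof.
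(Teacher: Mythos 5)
Your outline is the standard direct-method argument (uniform bounds from minimality, $BV$/finite-dimensional compactness, lower semicontinuity of the perimeter plus continuity of the fidelity term, then comparison with a fixed competitor), which is exactly the strategy of Ramlau and Ring \cite{RamlauRing1}; the paper itself gives no proof and defers entirely to that reference, and you have correctly located the genuinely delicate points (openness and nondegeneracy of the limit partition, boundedness of the heights, semicontinuity for topological rather than reduced boundaries) in the deferred technical apparatus. One small slip: under the nondegeneracy condition (\ref{nondegeneracy-cond}) all $m$ regions of an admissible partition must have measure at least $\delta$, so the competitor $(\bm{\Omega}_0,0)$ cannot be the trivial one-region partition and $J_\beta(\bm{\Omega}_0,0)=\|g^{\epsilon_n}\|^2_{L^2(\PM^N)}+\beta\sum_{k}\mathcal{H}^{N-1}(\Gamma_{0,k})$ rather than $\|g^{\epsilon_n}\|^2_{L^2(\PM^N)}$; the extra constant is harmless and the uniform bound survives.
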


Finally the last Theorem is a regularization result. 
\begin{theorem}
Let $f^\ast \in PC_m(D)$ be given,
$$
f^\ast=\sum_{k=1}^m f_k^\ast \chi_{\Omega_k^\ast},
$$
and let $g^\ast=\R{f}^\ast$. Assume we have noisy data $g^\epsilon \in L^2(\PM^N)$ with $\|g^\epsilon-g^\ast\|_{L^2(\PM^N)}\leq \epsilon$. Let us choose the parameter $\beta=\beta(\epsilon)$ satisfying the conditions $\beta(\epsilon) \to 0$ and $\epsilon^2/ \beta(\epsilon) \to 0$ as $\epsilon \to 0$. For any sequence $\epsilon_n \to 0$, let 
$(\tilde{\bm{\Omega}}^{n},\tilde{\bm{f}}^{n})$
denote the minimizers of the  functional $J_{\beta(\epsilon_n)}$  with initial data $g^{\epsilon_n}$ and  regularization parameter $\beta=\beta(\epsilon_n)$. Then  there exists a convergent subsequence of 
$(\tilde{\bm{\Omega}}^{n},\tilde{\bm{f}}^{n})$. Moreover, for every convergent subsequence with limit $(\tilde{\bm{\Omega}},\tilde{\bm{f}})$ the function 
$$
\tilde{f}=\sum_{k=1}^m \tilde{f}_k \chi_{\tilde{\Omega}_k} \in PC_m(D)
$$
is a solution of the equation $\R{f}=g^\ast$ with a minimal perimeter. Moreover if $f^\ast$ is the unique solution of this equation then the whole sequence converges
$$
\tilde{f}^{n}\to  f^\ast, \quad \textrm{in $L^2(D)$},
$$
when $n\to+\infty$.
\end{theorem}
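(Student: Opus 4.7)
The plan is the classical Tikhonov-type regularization convergence argument, exploiting the minimizing property of $(\tilde{\bm{\Omega}}^n,\tilde{\bm{f}}^n)$ together with compactness of sets of finite perimeter and the continuity of $\R$ on $L^2$.

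First, I would use $f^\ast$ itself as a competitor. Setting $P^\ast:=\sum_{k=1}^m\mathcal{H}^{N-1}(\Gamma^\ast_k)$ and using $\R f^\ast=g^\ast$ together with $\|g^\ast-g^{\epsilon_n}\|_{L^2(\PM^N)}\leq\epsilon_n$, the minimality of $(\tilde{\bm{\Omega}}^n,\tilde{\bm{f}}^n)$ yields
\[
\|\R\tilde{f}^n-g^{\epsilon_n}\|^2_{L^2(\PM^N)}+\beta(\epsilon_n)\sum_{k=1}^m\mathcal{H}^{N-1}(\tilde{\Gamma}_k^n)\leq\epsilon_n^2+\beta(\epsilon_n)P^\ast.
\]
Dividing by $\beta(\epsilon_n)$ and using $\epsilon_n^2/\beta(\epsilon_n)\to 0$ furnishes a uniform perimeter bound $\sum_k\mathcal{H}^{N-1}(\tilde{\Gamma}_k^n)\leq P^\ast+o(1)$; separately, the fidelity term combined with $\beta(\epsilon_n)\to 0$ gives $\R\tilde{f}^n\to g^\ast$ in $L^2(\PM^N)$.

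Next I would extract a convergent subsequence in two stages. The perimeter bound and the non-degeneracy condition (\ref{nondegeneracy-cond}) place the characteristic functions $(\chi_{\tilde{\Omega}_k^n})_n$ in a compact subset of $L^1(D)$ for every $k$, so along a subsequence $\chi_{\tilde{\Omega}_k^n}\to\chi_{\tilde{\Omega}_k}$ in $L^1(D)$, the limiting partition $\tilde{\bm{\Omega}}$ is admissible, and by lower semicontinuity of the perimeter under $L^1$-convergence of sets of finite perimeter,
\[
\sum_{k=1}^m\mathcal{H}^{N-1}(\tilde{\Gamma}_k)\leq\liminf_{n\to\infty}\sum_{k=1}^m\mathcal{H}^{N-1}(\tilde{\Gamma}_k^n).
\]
For the scalar coefficients, I would note that $\R$ is injective on $L^2$ (Theorem~\ref{th:Radonbijection}) and the limit functions $\chi_{\tilde{\Omega}_k}$ have disjoint supports of measure $\geq\delta$, so $\{\R\chi_{\tilde{\Omega}_k}\}_{k=1}^m$ are linearly independent in $L^2(\PM^N)$. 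By continuity of $\R$ and $L^1$-convergence of the partitions, $\R\chi_{\tilde{\Omega}_k^n}\to\R\chi_{\tilde{\Omega}_k}$, so the smallest singular value of the Gram matrix of $\{\R\chi_{\tilde{\Omega}_k^n}\}_k$ is bounded from below for large $n$; combined with boundedness of $\|\R\tilde{f}^n\|_{L^2(\PM^N)}$ (from the fidelity bound), this forces $|\tilde{f}_k^n|$ to be bounded. Extracting once more, $\tilde{f}_k^n\to\tilde{f}_k$ and therefore $\tilde{f}^{n_j}\to\tilde{f}=\sum_k\tilde{f}_k\chi_{\tilde{\Omega}_k}$ in $L^2(D)$.

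Passing to the limit now closes the three assertions. Continuity of $\R$ and $\R\tilde{f}^n\to g^\ast$ give $\R\tilde{f}=g^\ast$, so the limit is indeed a solution. For minimality of the perimeter, I would plug any other $h=\sum_k h_k\chi_{\Omega_k^h}\in PC_m(D)$ with $\R h=g^\ast$ into the minimality inequality for $J_{\beta(\epsilon_n)}$, divide by $\beta(\epsilon_n)$, and pass to the limit using $\|\R h-g^{\epsilon_n}\|^2\leq\epsilon_n^2$ together with the lower semicontinuity displayed above, obtaining $\sum_k\mathcal{H}^{N-1}(\tilde{\Gamma}_k)\leq\sum_k\mathcal{H}^{N-1}(\partial\Omega_k^h)$. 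If $f^\ast$ is the unique solution of $\R f=g^\ast$ in $PC_m(D)$, a standard subsequence-of-subsequence argument then upgrades the convergence: every subsequence of $(\tilde{f}^n)$ admits a further subsequence converging in $L^2(D)$, necessarily to $f^\ast$, hence the whole sequence converges to $f^\ast$. The main obstacle is the compactness step, where one must simultaneously control the geometric partitions $\tilde{\bm{\Omega}}^n$ via the perimeter bound and the scalar vector $\tilde{\bm{f}}^n$ via the injectivity of $\R$; the non-degeneracy condition (\ref{nondegeneracy-cond}) is essential here, since without a lower bound on $\mathcal{L}^N(\Omega_k)$ regions could collapse in the limit, destroying both admissibility of $\tilde{\bm{\Omega}}$ and the linear independence argument used to bound $\tilde{f}_k^n$.
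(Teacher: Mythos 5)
Your argument is correct and is essentially the intended one: the paper itself gives no proof of this theorem but defers to Ramlau and Ring~\cite{RamlauRing1}, whose proof follows exactly your scheme (competitor $f^\ast$ in the minimality inequality, division by $\beta(\epsilon_n)$ to get the uniform perimeter bound and $\R\tilde f^n\to g^\ast$, BV-compactness of the partitions plus a separate compactness argument for the coefficient vectors, lower semicontinuity, and the subsequence-of-subsequences step under uniqueness). The only caveat, which the paper explicitly sweeps into the ``technical details'' it omits, is that lower semicontinuity under $L^1$-convergence holds for the perimeter $\mathcal{H}^{N-1}(\partial^*\Omega_k)$ rather than for $\mathcal{H}^{N-1}(\partial\Omega_k)$ of the topological boundary, so one must work with a suitable class of admissible sets (or representatives) for which the two notions agree.
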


Finally, let us list some open problems in this context:
\begin{itemize}
\item Is the nondegeneracy condition (\ref{nondegeneracy-cond}) necessary?
\item Can one find an a priori optimal value for the number $m$ of different values?
\item Is it possible to give an a priori estimate on the $L^{\infty}$-norm of the solution (maximum principle)?
\item And finally, it would be very important for applications to prove the existence of minimizers of the functional $J_{\MS}$ not restricted to piecewise constant functions $f$.
\end{itemize}
We observe that all these problems are quite natural, and have been completely solved in the case of the standard Mumford-Shah functional $\MS$ in (\ref{defn:JMS}), see e.g.~\cite{JMMS,Fusco}.

\section{Electrostatic interpretation of $J_{\MS}$}\label{Sec:3Dinterpretation}

In this section we restrict our attention to the $3$-dimensional case. We propose an electrostatic interpretation of the regularization method based on the functional $J_{\MS}$ discussed in the previous section. The intent is to give a physical explanation of the fidelity term $\|\R{f}-g\|^2_{L^2(\PM^3)}$ in the functional (\ref{def:MSfunctional}), that provide the intuition for an improved regularization method.
For $N=3$, the inversion formula~(\ref{inversion formula0}) and the electrostatic identity~(\ref{eq:electid}) particularize, respectively, as follows: for all $f \in \mathcal{S}(\RM^3)$ one gets
\begin{equation}\label{inversionformula_again}
f=-\frac{1}{2(2\pi)^2}\Delta\I(\R{f})
\end{equation}
and 
\begin{equation}
\I(\R{f})=a_3 f \ast V, 
\end{equation}
where $V(x)=1/|x|$ and $a_3$ is a constant. 
We present two preliminary Lemmas.
\begin{lemma}\label{lemma:normRf}
For all real valued $f \in \mathcal{S}(\RM^3)$ one has
\begin{equation}
\|\R{f}\|^2=\frac{1}{2(2\pi)^2}\|\nabla\I(\R{f})\|^2.
\end{equation}
\end{lemma}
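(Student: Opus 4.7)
The plan is to combine the inversion formula~(\ref{inversionformula_again}) with the adjoint identity between $\R$ and $\I$, and then integrate by parts. Throughout, set $u=\I(\R{f})$, which is real-valued since $f$ is.

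The first step is a dual pairing on $\RM^3$. Unfolding the definitions of $\R$ and $\I$ and exchanging the order of integration via Fubini, one obtains the standard duality identity
\begin{equation}
\langle \R{f}, g \rangle_{L^2(\PM^3)} = \langle f, \I{g}\rangle_{L^2(\RM^3)},
\end{equation}
valid for every $g\in\mathcal{S}(\PM^3)$, provided that the $L^2(\PM^3)$ norm is normalised consistently with the probability measure used to define $\I$ in~(\ref{dual map}). Specialising to $g=\R{f}$ gives
\begin{equation}
\|\R{f}\|^2_{L^2(\PM^3)}=\langle f,u\rangle_{L^2(\RM^3)}.
\end{equation}

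The second step uses the inversion formula. By~(\ref{inversionformula_again}), one has $f=-\frac{1}{2(2\pi)^2}\Delta u$, so that Green's identity yields
\begin{equation}
\langle f,u\rangle_{L^2(\RM^3)} = -\frac{1}{2(2\pi)^2}\int_{\RM^3} u\,\Delta u \; \d x = \frac{1}{2(2\pi)^2}\|\nabla u\|^2_{L^2(\RM^3)}.
\end{equation}
The boundary contribution that Green's identity produces on $\partial B_R$ vanishes in the limit $R\to\infty$ by Lemma~\ref{sharp flat}: indeed $u=a_3\, f\ast V$ with $V(x)=1/|x|$, so for $f\in\mathcal{S}(\RM^3)$ one has the pointwise decay $u(x)=O(1/|x|)$ and $\nabla u(x)=O(1/|x|^2)$, making the flux $\int_{\partial B_R} u\,\partial_n u\,\d S$ of order $O(1/R)$. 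Chaining the two displayed identities gives the claim.

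The main obstacle is the normalisation bookkeeping in the duality step, since the $L^2(\PM^3)$ inner product depends on the measure placed on the manifold of hyperplanes and this must match the probability-measure convention used for $\I$; any mismatch in these normalisations would shift the overall constant away from $1/(2(2\pi)^2)$. Once these choices are fixed consistently (so that the stated form of the inversion formula~(\ref{inversionformula_again}) is compatible with the duality above), the rest of the argument is a one-line consequence of the inversion formula, Green's identity, and the Newtonian decay of $u$ furnished by Lemma~\ref{sharp flat}.
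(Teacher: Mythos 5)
Your argument is exactly the paper's: the adjoint identity $\langle \R f,g\rangle_{L^2(\PM^3)}=\langle f,\I g\rangle_{L^2(\RM^3)}$ with $g=\R f$, followed by the substitution $f=-\tfrac{1}{2(2\pi)^2}\Delta\,\I(\R f)$ and an integration by parts. The only additions are welcome ones the paper leaves implicit, namely the decay $u=O(1/|x|)$, $\nabla u=O(1/|x|^2)$ from Lemma~\ref{sharp flat} that kills the boundary flux, and the caveat that the measure on $\PM^3$ must be normalised consistently with the probability-measure convention in~(\ref{dual map}) for the constant to come out as stated.
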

\begin{proof}
We know  that $f=-\frac{1}{2(2\pi)^2} \Delta \I(\R{f})$, therefore
\begin{eqnarray}
&& \int_{\RM \times \SM^2}| \R f(X , \xi)|^2 \; \textrm{d}X\; \textrm{d}\xi  
=\int_{\RM^3} f(x)\, \I(\R{f}) (x)\; \textrm{d}x \nonumber \\
&&=   \int_{\RM^3} \frac{1}{2(2\pi)^2}[-\Delta \I(\R{f})] (x)\, \I(\R{f}) (x) \; \textrm{d}x \nonumber \\
&&=\frac{1}{2(2\pi)^2}  \int_{\RM^3}| \nabla \I(\R{f}) (x)|^2\; \textrm{d}x . 
\nonumber
\end{eqnarray}
\end{proof}

\begin{lemma}\label{lemma:electricfield}
For all real valued $f \in \mathcal{S}(\RM^3)$
define 
\begin{equation}
E=- \frac{1}{2(2\pi)^2} \nabla \I(\R{f})
\end{equation} 
and 
\begin{equation}
\varphi= \frac{1}{2(2\pi)^2} \I(\R{f}).
\end{equation} 
Then 
\begin{equation}
f= \nabla \cdot E =- \Delta \varphi.
\label{ephif}
\end{equation}
\end{lemma}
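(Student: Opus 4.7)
The plan is to derive both identities in \eqref{ephif} as direct consequences of the inversion formula \eqref{inversionformula_again} together with the elementary identity $\Delta = \nabla \cdot \nabla$. The lemma is essentially a restatement of the fact that, in three dimensions, $\I(\R{f})$ is (up to the constant $1/(2(2\pi)^2)$) the Coulomb potential of $f$, and so the electrostatic relations $E = -\nabla\varphi$ and $-\Delta\varphi = f$ should hold automatically.

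First I would handle the second equality. By definition $\varphi = \frac{1}{2(2\pi)^2}\I(\R{f})$, so formula \eqref{inversionformula_again} immediately yields
\begin{equation}
-\Delta\varphi = -\frac{1}{2(2\pi)^2}\Delta\I(\R{f}) = f.
\nonumber
\end{equation}
For the first equality, I would simply observe that since differentiation commutes with scalar multiplication,
\begin{equation}
E = -\frac{1}{2(2\pi)^2}\nabla\I(\R{f}) = -\nabla\!\left(\frac{1}{2(2\pi)^2}\I(\R{f})\right) = -\nabla\varphi,
\nonumber
\end{equation}
so that $\nabla\cdot E = -\nabla\cdot\nabla\varphi = -\Delta\varphi = f$.

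All manipulations are legitimate because $f \in \mathcal{S}(\RM^3)$ guarantees that $\I(\R{f}) = a_3\, f\ast V$ is smooth (the convolution of a Schwartz function with the locally integrable kernel $V(x)=1/|x|$ is smooth away from the singularity of $V$, and smoothness is preserved by convolution with a Schwartz function), so one may freely exchange $\nabla$ and $\Delta$ with the constant prefactor. There is no real obstacle here: the content of the lemma is purely notational, recasting the inversion formula \eqref{inversionformula_again} in the language of electrostatics in preparation for the physical interpretation that follows, where $\varphi$ plays the role of the potential and $E$ of the electric field generated by the ``charge density'' $f$.
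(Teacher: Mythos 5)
Your proof is correct and follows essentially the same route as the paper: both reduce the identity $\nabla\cdot E=-\Delta\varphi=f$ to the inversion formula $f=-\frac{1}{2(2\pi)^2}\Delta\I(\R{f})$ via $\Delta=\nabla\cdot\nabla$. The additional remarks on smoothness of $\I(\R{f})$ are harmless but not needed beyond what the paper already assumes.
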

\begin{proof}
\begin{eqnarray*}
 \nabla \cdot E &=&  - \frac{1}{2(2\pi)^2} \nabla \cdot \nabla \I(\R{f}) 
 = -\Delta \varphi  \nonumber \\
 &=&   \frac{1}{2(2\pi)^2}(-\Delta)  \I(\R{f}) 
 = f,
\end{eqnarray*}
where we used the inversion formula~(\ref{inversionformula_again}).
\end{proof}
Now we consider a measured tomogram $g : \PM^3 \to \RM$ and let us assume that $g=\R{f_0}$ for some $f_0 : \RM^3 \to \RM$. By Lemma \ref{lemma:normRf}-\ref{lemma:electricfield} it follows immediately that the fidelity term $\|\R{f}-g\|_{L^2(\PM^3)}^2$ can be rewritten as follows:
\begin{eqnarray}
\|\R{f}-g\|_{L^2(\PM^3)}^2 &= & 2(2\pi)^2 \|E-E_g\|^2_{L^2(\RM^3)} \nonumber \\
                                       & =& 2(2\pi)^2\|\nabla \varphi - \nabla \varphi_g\|^2_{L^2(\RM^3)},
\end{eqnarray}
where 
\begin{equation}
E=- \frac{1}{2(2\pi)^2} \nabla \I(\R{f}), \quad 
E_g=- \frac{1}{2(2\pi)^2} \nabla \I(g)
\end{equation} 
are the corresponding electric fields, while 
\begin{equation}
\varphi= \frac{1}{2(2\pi)^2} \I(\R{f}), \quad
\varphi_g= \frac{1}{2(2\pi)^2} \I(g)
\label{eq:potentials}
\end{equation} 
are the corresponding potentials.

With respect to the standard Mumford-Shah functional $\MS$ in (\ref{defn:JMS}), the new fidelity term in the functional $J_{\MS}$ in (\ref{def:MSfunctional}) controls the distance between the Radon transform of $f$ and the tomographic data $g$. The relevant difference with respect to the original functional is that the function $f$ and its Radon transform $\R{f}$ are defined in different spaces.
Let us try to interpret the  fidelity term $\|\R{f}-g\|_{L^2(\PM^3)}^2 $ from a physical point of view. A key ingredient for this goal is the electrostatics formulation of the Radon transform. This formulation can be summarized as follows: if we consider, in dimension $3$, a function $f$, 
we can think at it as a charge distribution density; if we apply to $f$ first the Radon operator $\R$ and then its adjoint $\I$ we obtain, up to a constant, the electrostatic potential generated by the charge distribution $f$. 
This formulation can be stated in any dimension $N$: the difference with general potential theory in dimension $N$ is that, in tomography, the potential produced by a point charge always scales like $1/|x|$, which is the case of electrostatic potential only in dimension $3$. From the electrostatic formulation of the Radon transform we can prove that the fidelity term in the functional $J_{\MS}$ actually imposes that the electric field produced by the charge distribution $f$ must be close to the ``measured electric field''. Therefore we conclude that the term $ \|\R{f}-g\|_{L^2(\PM^3)}^2$ is a fidelity term in this weaker sense.

Using this property based on the electrostatic interpretation of the tomographic reconstruction, we can try to minimize some appropriate functionals in the new variables $E$ (electric field) or $\varphi$ (electric potential) and then compute the corresponding  $f$ (charge density).
We manipulate the functional $J_{\MS}$ as follows:
\begin{eqnarray}\label{eqnaeeay_manipulation}
&& J_{\MS}(\Gamma,f)\nonumber \\
&=& \|\R{f}-g\|^2_{L^2(\PM^3)}+\alpha \int_{\RM^3 \setminus \Gamma} |\nabla f(x)|^2 \; \textrm{d}x + \beta \mathcal{H}^2(\Gamma) \nonumber \\
 &=&2 (2\pi)^2 \|E-E_g\|^2+\alpha \int_{\RM^3 \setminus \Gamma} |\nabla (\nabla \cdot E)(x)|^2 \; \textrm{d}x  \nonumber \\
 && + \beta \mathcal{H}^2(\Gamma) \nonumber \\
&=&2 (2\pi)^2 \|E-E_g\|^2+\alpha  \int_{\RM^3 \setminus \Gamma} | \Delta E|^2 \; \textrm{d}x  + \beta \mathcal{H}^2(\Gamma) \nonumber \\
&=& F(\Gamma,E) 
\end{eqnarray}
where $F$ is a new functional depending on a vector function $E$ and on a set $\Gamma$, and  we used the fact that $\nabla \wedge E =0$, since $E$ is conservative.

We observe that the functional
\begin{eqnarray}
F(\Gamma,E) &=& 2 (2\pi)^2  \|E-E_{g}\|^2+\alpha \int_{\RM^3 \setminus \Gamma} |\Delta E|^2 \; \textrm{d}x  
\nonumber\\
& & + \beta \mathcal{H}^2(\Gamma) 
\end{eqnarray}
is a second order functional for a vector field $E$ in which appears the measure of the set $\Gamma$ that is the set of discontinuities of $f$ and thus is the set of discontinuities of $\nabla \cdot E$. 
In the functional $F$ we recognize some similarities with a famous second-order free-discontinuity problem: the Blake-Zisserman model. This model is based on the minimization of the Blake-Zisserman functional
\begin{eqnarray}\label{defn:JBZ}
BZ(\Gamma_0,\Gamma_1,v)&=&\|v-v_0\|_{L^2(D)}^2  \nonumber \\
                                               && + \alpha \int_{D \setminus (\Gamma_0 \cup \Gamma_1)} |\Delta v(x)|^2\; \textrm{d}x \nonumber \\
                                              &&+ \beta \;\mathcal{H}^{N-1}(\Gamma_0 \cap D)\nonumber \\
                                              && + \gamma\;  \mathcal{H}^{N-1}((\Gamma_1 \setminus \Gamma_0) \cap D),
\end{eqnarray}
among admissible triplets $(\Gamma_0,\Gamma_1,v)$, where
\begin{itemize}
\item $D \subset \RM^N$ is an open set;
\item $\Gamma_0, \Gamma_1 \subset \RM^N$ are closed sets;
\item $\Gamma_0$ is the set of discontinuities of $v$ (jump set), and $\Gamma_1$ the the set of discontinuities of $\nabla v$ (crease set);
\item $v: D \to \RM$, $v \in C^2(D \setminus (\Gamma_0 \cup \Gamma_1)) \cap C(D \setminus \Gamma_0)$ is a scalar function;
\item $\Delta v$ denotes the distributional Laplacian of $v$;
\item $v_0 \in L^2(D)$ is the datum (grey intensity levels of the given image);
\item $\alpha, \beta, \gamma >0$ are parameters;
\item $\mathcal{H}^{N-1}$ denotes the $(N-1)$-dimensional Hausdorff measure. 
\end{itemize}
The Blake-Zisserman functional allows a more precise segmentation than the Mumford-Shah functional in the sense that also the curvature of the edges of the original picture is approximated. On the other hand, minimizers may not always exist, depending on the values of the parameters $\beta, \gamma$ and on the summability assumption on $v_0$.
We refer to \cite{BZ, CLT1, CLT2, CLT3, CLT4, CLT5} for motivation and analysis of variational approach to image segmentation and digital image processing. In particular see \cite{CLT1, CLT2, Coscia} for existence of minimizer results and \cite{CLT3} for a counterexample to existence and \cite{CLT6, CLT7}  for results concerning the regularity of minimizers.

Equation~(\ref{eqnaeeay_manipulation}) implies that the functional $J_{\MS}$ can be rewritten in terms of the vector field $E$ and of the discontinuities set of $\nabla \cdot E$, i.e. the set of creases of $E$, using the terminology of the Blake-Zisserman model. The fact that in the functional $F$ the discontinuities set of $E$ is not present depends on the fact that we are assuming that the charge density $f$ in the functional $J_{\MS}$ do not concentrate on surfaces or on lines. If we admit concentrated charge layers we can consider the Blake-Zisserman model  for the vector function $E$ as a relaxed version of the Mumford-Shah model for the charge $f$. In other words we propose to investigate the connections between minimizers of $J_{\MS}$ and minimizers of the higher order functional $J_{BZ}$: 
\begin{eqnarray}\label{defn:JBZE}
J_{BZ}(\Gamma_0,\Gamma_1,E)&=&\|E-E_g\|_{L^2(\RM^3)}^2  \nonumber \\
                                               && + \alpha \int_{\RM^3 \setminus (\Gamma_0 \cup \Gamma_1)} |\Delta E(x)|^2\; \textrm{d}x \nonumber \\
                                              &&+ \beta \;\mathcal{H}^{2}(\Gamma_0 )\nonumber \\
                                              && + \gamma\;  \mathcal{H}^{2}(\Gamma_1 \setminus \Gamma_0),
\end{eqnarray}
with the additional constraint $\nabla \wedge E=0$.
The main advantage of this approach is that the functional $J_{BZ}$  is a purely differential functional, while the functional $J_{\MS}$ is an integro-differential one. We expect that some results about the Blake-Zisserman model that could be rephrased into tomographic terms would provide immediately new results in tomography. Conversely all the peculiar tomographic features as the intrinsic vector nature of the variable $E$, the fact that its support cannot be bounded and the extra-constraint $\nabla \wedge E = 0$, motivate new research directions in the study of free-discontinuities problems. For example,  an interesting result in this context would be the determination of a good hypothesis on the datum $E_g$ that ensure that the charge density $f$ do not concentrate. \

We conclude this section with some comments:
\begin{itemize}
\item We proved that the measured data $g$ are actually the measured electric field produced by the unknown charge density, so  the term $\|\R{f}-g\|^2_{L^2(\PM^3)}$ in the functional  is a fidelity term in a weak sense.
\item The problem of the reconstruction of the charge can be rephrased into a reconstruction problem for the electric field. The electric field is an irrotational vector field, so the new minimization problem is actually a constrained minimization. In order to  avoid this constraint one could reformulate the reconstruction problem in terms of the electric potential $\varphi$ ($E=-\nabla \varphi$) obtaining a third-order functional  in which the fidelity term is 
\begin{equation}
\|\nabla \varphi - \nabla \varphi_g\|^2_{L^2(\RM^3)},
\end{equation}
where the potentials are given by~(\ref{eq:potentials}).
\item All this considerations hold true in dimension~$3$. In a generic dimension $n \geq 3$ the situation is quite different because the inversion formula for the Radon transform involves a (possibly fractional) power of the Laplacian. In this case the electrostatic description of tomography given in this section  fails. In order to restore it, it is necessary to consider another Radon-type transform which involves integrals of $f$ over linear manifolds with codimension $d$ such that $(n-d)/2=1$, i.e. $d= n-2$, see e.g. \cite{Helgason,tomolectures}.
\end{itemize}

\section*{Acknowledgments}
We thank G. Devillanova, G. Florio and F. Maddalena for for helpful discussions.

This work was  supported by ``Fondazione 
Cassa di Risparmio di Puglia'' and by the Italian National Group of Mathematical Physics (GNFM-INdAM).

\section*{References}

\end{document}